\documentclass[10pt]{article}

%\baselineskip 16pt \oddsidemargin 10pt \evensidemargin 10pt 
%\topmargin 0pt \headheight 0pt \headsep 0pt \footskip 32pt 
%\textheight 40\baselineskip \advance \textheight by \topskip 
%\textwidth 455pt

\usepackage[top=2cm, bottom=2cm, left=2cm, right=2cm]{geometry}
\usepackage{amsmath,amssymb,amsfonts,amsthm,MnSymbol}
\usepackage{graphics,subfigure,color} 
\usepackage{latexsym}
\usepackage[dvips]{graphicx}
\usepackage{epsfig}
\usepackage{hyperref}
\usepackage{chngcntr}
\usepackage{psfrag}
\usepackage{multirow}
\setcounter{MaxMatrixCols}{16}
\usepackage{braket}
\usepackage{float}
\usepackage{enumerate}

\setcounter{secnumdepth}{2}
\setcounter{tocdepth}{2}

\setlength{\headheight}{26pt} 

\newcommand{\be}{\begin{equation}}
\newcommand{\ee}{\end{equation}}

\newcommand{\cC}{{\cal{C}}}
\newcommand{\cD}{{\cal{D}}}

\newcommand{\cF}{{\cal{F}}}
\newcommand{\cG}{{\cal{G}}}
\newcommand{\cH}{{\cal{H}}}

\newcommand{\cU}{{\cal U}}

\newcommand{\mbT}{\mathbb{T}}

\DeclareMathOperator{\Tr}{Tr}

\newtheorem{theorem}{Theorem}

\theoremstyle{definition}

\allowdisplaybreaks[4]

\bibliographystyle{utphys}

\begin{document}

\title{Symmetry breaking in tensor models}
 
\author{
\ Dario Benedetti\footnote{dario.benedetti@th.u-psud.fr, Laboratoire de Physique Th\'eorique, CNRS-UMR 8627, Universit\'e Paris-Sud 11, 91405 Orsay Cedex, France},
 \ Razvan Gurau\footnote{rgurau@cpht.polytechnique.fr, Centre de Physique Th\'eorique, CNRS UMR 7644, \'Ecole Polytechnique, 91128 Palaiseau Cedex, France
 and Perimeter Institute for Theoretical Physics, 31 Caroline St. N, N2L 2Y5, Waterloo, ON, Canada.}
}

\maketitle

\begin{abstract} In this paper we analyze a quartic tensor model with one interaction for a tensor of arbitrary rank. This model 
has a critical point where a continuous limit of infinitely refined random geometries is reached.
We show that the critical point corresponds to a phase transition in the tensor model associated to a breaking of the unitary symmetry. 
We analyze the model in the two phases and prove that, in a double scaling limit, the symmetric phase corresponds to a theory of infinitely refined 
random surfaces, while the broken phase corresponds to a theory of infinitely refined random nodal surfaces. At leading order in the double scaling limit
planar surfaces dominate in the symmetric phase, and planar nodal surfaces dominate in the broken phase.
\end{abstract}

\section{Introduction}

Tensor models \cite{ambj3dqg,sasa1,color,review} generalize matrix models \cite{DiFrancesco:1993nw} and generate 
dynamical triangulations \cite{EDTDavid,EDTAmbjorn} in any dimension. The number of simplices in a triangulation 
diverges when the coupling constants are tunned to some critical values \cite{Kazakov:1985ds,David:1984tx,critical,uncoloring}. 
Sending at the same time the edge length of the individual simplices to zero, so as to keep the physical dimensionfull volume of the triangulation fixed, matrix and tensor models
describe a theory of random infinitely refined spaces \cite{DiFrancesco:1993nw,EDTAmbjorn}.

Like matrix models \cite{'tHooft:1973jz}, tensor models are endowed with a small parameter, $1/N$ (where $N$ is the size of the tensor) and exhibit a $1/N$ expansion
\cite{expansion1,expansion2,expansion3,uncoloring,review}. However, while in matrix models each order in $1/N$ corresponds to a fixed topology (i.e. to fixed genus surfaces) and the $1/N$ expansion is topological,
the same can not hold true in higher dimensions. Indeed, for $D\ge 3$, there does not exist a single topological invariant that distinguishes topologies which could be used to 
index a topological expansion. Consequently the $1/N$ expansion is not topological in dimension higher than two. However, it is as close as one can get to one:
the different topologies are explored progressively (i.e. at any order in $1/N$ only a finite number of topologies contribute) 
and at leading order only the spherical topology contributes.

The triangulated spaces generated by a tensor model in rank $D$ have topological dimension $D$. However, the emergent continuous geometry in the critical regime has typically a different effective dimension. While in the case of matrices the emergent geometry at leading order is a Brownian sphere \cite{BrownMap1,BrownMap2,BrownMap3}, with Hausdorff dimension 4 and spectral dimension 2, 
for tensor models the emergent geometry at leading order is typically a branched polymer \cite{melbp} (a continuous random tree in the mathematical literature), with Hausdorff dimension 2 and spectral dimension 4/3.
In order to uncover a genuine higher dimensional random geometry using tensor models one needs to go beyond the simple critical behavior at leading order.
A recent proposition consists in boosting some subleading contributions \cite{Bonzom:2015axa} in tensor models. With these non canonical scalings one can write tensor models with 
a richer phase structure having notably a Brownian sphere phase. 

We follow here an alternative route towards a non branched polymer phase in higher dimension. Our approach is inspired by the 
double scaling limit \cite{double,double1,double2,DGR,GurSch,Bonzom:2014oua}. In the critical regime, the contributions of the lower orders in $1/N$ are enhanced. 
Taking at the same time the coupling constants to criticality and $N$ to infinity
one obtains a continuous phase in which many orders in $1/N$ contribute. In this way one can hope that 
the contributions of this larger family of graphs pile up and eventually lead to a new geometric phase. 
 
The double scaling limit of tensor models ($D\ge 3$) has so far been studied 
for the case of actions which are invariant under a global permutation of the indices of the tensor
(i.e. for which all the indices of the tensor are treated on an equal footing). 
In this case the leading family in double scaling turns out to be exponentially bounded for $3\le D<6$ and
not exponentially bounded for $D>6$  \cite{DGR} (for matrices $D=2$, the leading family in  double scaling is not exponentially bounded).
 While not proven, it is very likely that all the 
leading order graphs in double scaling for $3\le D<6$ have the topology of a $D$ dimensional sphere. However, as this family proliferates like
a family of trees (the \emph{cherry trees} \cite{DGR}), it is very likely that for models possessing a color permutation symmetry the double scaling limit explored so far
leads again to a branched polymer phase. 

In this paper we study in detail a tensor model with only one quartic interaction.
The main difference with respect to the models studies previously is that in our case the indices of the tensor are \emph{not} treated on an equal footing: 
as we will see below the action is \emph{not} invariant under a global permutation of the indices of the tensor.

We study the critical behavior of this model and show that (similarly to the color symmetric quartic model \cite{Delepouve:2015nia})
the critical point of the dynamical triangulation continuum limit corresponds to a phase transition in the tensor model. 
However, the similarities between the model with only one interaction and the color symmetric model end here. As the 
non symmetric model is simpler than the symmetric one, we are able to solve the equations of motion in both phases and identify 
the physical vacua of the theory on the two sides of the phase transition. We are then able to analyze the approach to criticality (in double scaling)
on both sides of this phase transition.

The main result of this paper is that, for the model with only one interaction,
one obtains a genuinely new geometric phase in double scaling: instead of obtaining (at leading order in double scaling) a family of trees, 
one obtains (at leading order in double scaling) a family of planar graphs. The phase transition (in double scaling) is a phase transition 
between a phase of planar surfaces and a phase of planar nodal surfaces.

Of course the emergent geometry obtained in double scaling is not yet genuinely $D$ dimensional, but (very likely) is a Brownian sphere.
However, this is a first step in the quest for a genuinely new random geometry in higher dimensions: 
we succeeded for the first time in finding a non branched polymer state of higher dimensional random geometry in a scaling regime of tensor models. 

Besides their importance for the invariant tensor models, our results could be relevant for group and tensor 
field theories \cite{Oriti:2011jm,Baratin:2011aa,Krajewski:2012aw,tt2,Rivasseau:2013uca,BenGeloun:2011rc,Geloun:2014kpa,Samary:2014oya,Lahoche:2015ola}.
A regime parallel to the double scaling limit we discuss in this paper could be reached naturally by such models 
under the renormalization group flow  \cite{BenGeloun:2012yk,Carrozza:2014rba}.

This paper is organized as follows. We introduce the quartic tensor model with one interaction in section 
\ref{sec:model}. We obtain the vacua of this model and show that the critical point corresponds to a phase 
transition in section \ref{sec:vacua}.  We discuss the effective theory around a vacuum state in sections \ref{sec:intout} and \ref{sec:efft}. Finally 
we analyze the random geometry underlying the effective theory in the two phases of the model in section \ref{sec:phgeom}.

\section{The model}\label{sec:model}

An introduction to tensor models in general and to the quartic models in particular can be found in \cite{uncoloring,review,Nguyen:2014mga,expansioin6}.
Let us consider a rank $D$ tensor ${\mbT}_{a^1 \dots a^D}$, where $a^c= 1\dots N$ (and let use denote $\bar{\mbT}_{a^1\dots a^D}$ its complex conjugated) 
transforming under the external tensor product of 
$D$ fundamental representations of the unitary group ${\cal U}(N)$:
\begin{equation}\label{eq:transf}
  \mbT'_{a^{\cD}} = \sum_{b^1,\dots b^D=1}^N U^{(1)}_{a^1b^1} \dots U^{(D)}_{a^Db^D}  \mbT_{b^{\cD}} \;, \qquad
 \bar \mbT'_{  a^{\cD}} = \sum_{ b^1,\dots   b^D=1}^N \bar U^{(1)}_{  a^1   b^1} \dots \bar U^{(D)}_{  a^D   b^D}  \bar \mbT_{ b^{\cD}} \;.
\end{equation}
The position $c\in \cD = \{1,\dots D\}$ of an index is called its \emph{color}. 
A tensor model is a probability distribution for $\mbT$ and $\bar \mbT$ invariant under the transformations
in Eq.~\eqref{eq:transf}. The partition function of the quartic melonic tensor model symmetric under permutations of the colors is:
\begin{align}\label{eq:4model}
Z^{\rm{sym.}}(g) &= \int \left(  \prod_{n^{\cD}}  N^{D-1} \frac{ d \bar \mbT_{n^{\cD}}  d\mbT_{n^{\cD}} }{2  i  \pi} \right) \; 
e^{ -N^{D-1} \left(  \sum_{n^{\cD} \bar n^{\cD}} \bar \mbT_{ \bar n^{\cD} }  \delta_{ \bar n^{\cD} n^{\cD}} 
  \mbT_{ n^{\cD}}   - \frac{g^2}{2} \sum_{ c \in \cD } V^{(4)}_c ( \bar \mbT,  \mbT  )    
   \right) }  \;, \crcr
    V^{(4)}_c( \bar \mbT,  \mbT  )    &= \sum_{\bar n^{\cD} n^{\cD} m^{\cD} \bar m^{\cD}}
        \left( \bar \mbT_{\bar n^{\cD}} \delta_{\bar n^{\cD \setminus \{c\}} n^{\cD\setminus \{c\}}} \mbT_{n^{\cD}} \right)    \delta_{\bar n^{c} m^{c} } \delta_{\bar m^{c} n^{c}}
        \left( \bar \mbT_{\bar m^{\cD}} \delta_{\bar m^{\cD \setminus \{c\}} m^{ \cD \setminus \{c\}} }\mbT_{m^{\cD}} \right) \; ,
\end{align}
where, for any set $\cC \subset \cD$, we denote $n^\cC = (n^c, c\in\cC)$ and $ \delta_{\bar n^{\cC} n^{\cC}} = \prod_{c\in \cC} \delta_{\bar n^c n^c}$. 
Each invariant $ V^{(4)}_c( \bar \mbT,  \mbT  )  $ represents a gluing of four $D$-simplices, two positively and two negatively oriented. The coupling 
$g$ counts the number of positively oriented simplices. When evaluating this partition function in perturbation theory one obtains Feynman graphs having
$(D+1)$ colors which are dual to $D$-dimensional triangulations \cite{review}.
  
This model can be studied in the intermediate field representation \cite{Nguyen:2014mga,expansioin6}.
Each quartic interaction can be represented as Gaussian integral over an auxiliary $N \times N$ hermitian matrix:
\be
 e^{ N^{D-1} \frac{g^2}{2} V^{(4)}_c ( \bar \mbT,  \mbT  ) }  
 =\int [dH_c]\;   e^{-\frac{1}{2} N^{D-1} \Tr[H^cH^c] +g N^{D-1}\sum_{n^c \bar n^c} H^c_{\bar n^c n^c} \sum_{n^{\cD \setminus \{ c \} } \bar n^{\cD \setminus \{ c \} }} 
 \left( \bar \mbT_{\bar n^{\cD}} \delta_{\bar n^{\cD \setminus \{c\}} n^{\cD\setminus \{c\}}} \mbT_{n^{\cD}} \right) 
 } \;, 
\ee
normalized to $1$ for $g=0$. As there are $D$ quartic melonic interactions (one for each $c\in \cD$), 
one must introduce $D$ intermediate matrix fields $H^c$.  
The original tensor $\mbT$ and $\bar \mbT$ can be integrated out to obtain: 
\begin{align}
  Z^{\rm{sym.}}(g)   =  \int \left( \prod_{c=1}^D [dH^{c}] \right) \; 
    e^{ - S(H)  } \; , \qquad S(H) = \frac{1}{2} N^{D-1}  \sum_{c=1}^{ D}   \Tr_{c}[H^{c}H^{c}]  -  \Tr_{\cD} \left[ \ln R(H) \right]  \;,
\end{align}
where $\Tr_c$ is the trace over the index of color $c$, $\Tr_{\cD}$ is the trace over all the indices, and:
\be
 R(H)  =  \frac{1}{  \mathbb{I}^{\cD}- g \sum_{ c=1}^D  H^{c} \otimes \mathbb{I}^{\cD\setminus \{c\}}   } \; ,
\ee
where we denoted $\mathbb{I}^c$ the identity matrix with indices of color $c$ and $\mathbb{I}^{\cC} = \bigotimes_{c\in \cC} \mathbb{I}^{c}$. 

In this paper we will discuss a quartic melonic model with only one interaction. In this case one uses only one intermediate matrix field 
and (dropping the index $c$) the partition function simplifies to:
\begin{align} \label{eq:action}
Z(g) = \int [dH] \;  e^{-N^{D-1} S(H) } \;, \qquad  S(H) =  \frac{1}{2}  \Tr[H^2] -\Tr\left[  \ln \left( \frac{1}{1-g H} \right) \right] \;.
\end{align}

Observe that in this form a tensor models in rank $D$ looks very similar to a matrix model for a $N \times N$ matrix. The only difference comes from the 
modified scaling factor $N^{D-1}$ instead of $N$ for the action. This apparently innocuous modification leads to numerous 
consequences which we will be exploring in detail in this paper.

\section{Vacua of the theory}\label{sec:vacua}

The equations of motion of the model defined in Eq.~\eqref{eq:action} are:
\be \label{eq:eom-1}
H-\frac{g}{1-gH}=0 \;.
\ee
Due to the unitary invariance, Eq.~\eqref{eq:eom-1} fixes only the eigenvalues $\lambda_i$ of $H$: upon diagonalization, the equations of motion become:
\be \label{eq:eom-1eig}
\lambda_i-\frac{g}{1-g\lambda_i}=0 \;\;\; \text{for} \;\;\;i = 1, \ldots, N \, ,
\ee
whose two solutions are:
\be \label{eq:apm}
a_\pm \equiv \frac{1\pm \sqrt{1-4 g^2}}{2g} \, .
\ee
We have a critical value of the coupling at $g_c=1/2$, where we meet a square root singularity and the two solutions merge.
The solution $a_-$ has a regular Taylor expansion around $g=0$ (starting at linear order, $a_- = g +O(g^3)$), while the solution $a_+$ has a simple pole at the origin.

Up to conjugation by an arbitrary unitary matrix, any solution of the equations of motion (i.e. a \emph{vacuum} of the theory) is of the form:
\be \label{vacua}
\bar H_{(N_+,N_-)} = \begin{pmatrix}
                 a_+ \mathbb{I}_+ & 0 \\ 0 & a_-\mathbb{I}_- 
                 \end{pmatrix} = a_+ \begin{pmatrix} \mathbb{I}_+ & 0 \\ 0 & 0 \end{pmatrix} 
                               + a_-  \begin{pmatrix} 0 & 0 \\ 0 &  \mathbb{I}_- \end{pmatrix}
 \equiv a_+ P_+ + a_- P_-  \;,
\ee
where $\mathbb{I}_+$ and $\mathbb{I}_-$ are the identity matrices of size $N_+\times N_+$ and respectively $N_-\times N_-$, with $N_+ +N_-=N$, otherwise arbitrary. There are 
$N+1$ possible solutions of this type.

Observe that from the onset we have neglected the contribution of the measure $[dH]$ to the equations of motion. 
In fact this contribution can profoundly alter the equations of motion and it is possible that the ``vacua'' we identified by just considering
the action are nowhere near the true vacua of the theory. As we will see below, this is in fact what happens in the case of matrices $D=2$.
However, for tensors ($D\ge 3$) the contribution of the measure is subleading in $1/N$ and the solutions in Eq.~\eqref{vacua} are 
true vacuum states in the large  $N$ limit. 

Each $\bar H_{(N_+,N_-)}$ is invariant under the residual symmetry group ${\cal U} (N_+) \times {\cal U} (N_-)$:  
\be
  \bar H_{(N_+,N_-)} = \begin{pmatrix}
                    U_+ & 0 \\ 0 & U_-
                   \end{pmatrix}   \bar H_{(N_+,N_-)} \begin{pmatrix}
                    U_+^{\dagger} & 0 \\ 0 & U_-^{\dagger}
                   \end{pmatrix} \;, \quad \forall U_+ \in {\cal U} (N_+) \;,  \forall U_-\in  {\cal U} (N_-) \; ,
\ee
and, being a representative in a conjugacy class of solutions of the classical equations of motions, is a saddle point of the action $S(H)$. 
The partition function is as a sum over the saddle points $ \bar H_{(N_+,N_-)}$:
\be
  Z(g) = \sum_{N_+} C_{(N_+,N_-)} Z_{(N_+,N_-)} \qquad Z_{(N_+,N_-)} = \int_{H \text{ close to } \bar H_{(N_+,N_-)}} [dH] \; e^{ -N^{D-1} S(H) } \;.
\ee

In the rest of this paper we will discuss the path integral near a saddle point $  \bar H$ (from here on we drop the indices $(N_+,N_-)$ in order to simplify the notation).

Evaluating the action at a generic solution $\bar H$, we obtain:
\be \label{FreeEn}
S(\bar H)=\frac{1}{2}  \Tr\left[\bar H^2\right] -\Tr \left[ \ln \left(  \frac{1}{g}\bar H \right)\right]
 = N_+ \left( \frac{a_+^2}{2} - \ln \frac{a_+}{g} \right) + N_- \left( \frac{a_-^2}{2} - \ln \frac{a_-}{g} \right) \, .
\ee
Near $g=0$, regularity of the free energy demands $N_+=0$. However, at $g=g_c=1/2$, the free energy develops a singularity, and beyond that, solutions with generic $N_+$ are possible. For $g>g_c$, 
it can be easily checked that the real part of \eqref{FreeEn} is independent of $N_+$ and $N_-$. Therefore all these vacua contribute to the same leading exponential order, and in the broken phase we 
do not have at this level any reason to choose a particular value of $N_+$.

Close to a vacuum $\bar H$, we make the change of variables $H= \bar H  + M$, splitting the field into a background $\bar H$ and a fluctuation field $M$. 
Observe that even if $\bar H$ is not hermitian (which happens if $a_{\pm}$ are not real), the perturbation $M$ remains hermitian, and the translation by $\bar H$ is 
a translation of the diagonal entries of $H$ in the complex plane.
Using the equations of motion we obtain:
\begin{align*}
& S(\bar H + M)  = \frac{1}{2}  \Tr[(\bar H + M)^2] -\Tr \left[ \ln \left( \frac{1}{1-g (\bar H + M)} \right) \right]\crcr
& \qquad  = \frac{1}{2}  \Tr[ \bar H^2] + \Tr[\bar H M ] + \frac{1}{2}\Tr[M^2]-\Tr \left[  \ln \left( \frac{1}{1-g \bar H } \right) \right]
- \Tr \left[  \ln \left( \frac{1}{1- \frac{g}{1 -g \bar H} M } \right) \right] \crcr
& \qquad = \frac{1}{2}  \Tr[ \bar H^2]  -\frac{1}{g} \Tr[\bar H ] +  \Tr[\bar H M ] + \frac{1}{2}\Tr[M^2]
- \Tr \left[  \ln \left( \frac{1}{1-  \bar H  M } \right) \right] \crcr
& \qquad = \frac{1}{2}  \Tr[ \bar H^2]  -\frac{1}{g} \Tr[\bar H ] +
 \frac{1}{2} \Bigg[ \Tr[M^2] - \Tr\left[ \bar HM\bar HM  \right]  \Bigg] - \sum_{p\ge 3} \frac{1}{p} 
  \Tr\left[ (\bar HM)^p    \right] \; ,
\end{align*}
therefore the effective quadratic term (defining the inverse covariance) for the fluctuation field is:
\be
 \frac{1}{2}  \Tr\left[M^2\right] - \frac{1}{2} \Tr[ \bar H M \bar H M] .
\ee
Introducing the following definitions:
\be
M^{++} = P_+ M P_+ \, \quad M^{+-} = P_+ M P_- , \quad M^{--} = P_- M P_- , \qquad M = \begin{pmatrix} M^{++}  &  M^{+-}  \\ M^{-+} & M^{--} \end{pmatrix} \;,
\ee
and taking into account that $P_+ + P_- = \mathbb{I}$, the quadratic part for the fluctuation field rewrites as:
\be
 \frac{1}{2} (1- a_+^2 ) \Tr [(M^{++})^2] + \frac{1}{2} (1- a_-^2 ) \Tr [(M^{--})^2] + (1- a_+ a_- ) \Tr [M^{+-} M^{-+}] \;.
\ee
The inverse covariance is an operator in the vectors space of hermitian matrices with inner product $\Braket{M_1|M_2}=\Tr[M_1M_2]$. Since:
\be
  ( P_+ \otimes P_+ ) \ket{M} = \ket{ P_+ M P_+ } = \ket{M^{++}} \;,
\ee
the inverse covariance is:
\be
(1- a_+^2 ) P_+ \otimes P_+  +  (1- a_-^2 ) P_- \otimes P_-  +  (1- a_+ a_- ) (P_+ \otimes P_- + P_- \otimes P_+) \;,
\ee
and $P_+ \otimes P_+$, $P_- \otimes P_-$ and $P_+ \otimes P_- + P_- \otimes P_+$ are (mutually orthogonal) projectors.
The mass eigenvalues for the fluctuation field $M$ are then read off as $\Lambda_+\equiv 1- a_+^2$ with degeneracy $N_+^2$, $\Lambda_-\equiv 1- a_-^2$ with degeneracy $N_-^2$, and $1- a_+ a_- =0$ 
with degeneracy $N_+ N_-$.

Defining $\epsilon= 1-4g^2$ we have $\Lambda_\pm = 2 \sqrt{\epsilon} / (\sqrt{\epsilon}\mp 1)$, and we distinguish the following situations:

\begin{itemize}
 \item For small $\epsilon>0$, the leading behavior of the mass eigenvalues of the fluctuation field is $\Lambda_\pm \sim \mp 2 \sqrt{\epsilon}$, and the only positive eigenvalue is $\Lambda_-$.
 It follows that for $g<g_c$ only the solution $N_-=N, N_+=0$ is stable, and therefore we are in the ${\cal U}(N)$ symmetric phase, in agreement with the small $g$ analysis.

\item At $\epsilon=0$ (i.e. at $g=g_c$), all the mass eigenvalues are zero, signaling criticality.

\item For small $\epsilon<0$, the leading behavior is  $\Lambda_\pm \sim 2 |\epsilon | \mp i 2\sqrt{|\epsilon |}$, and 
both $\Lambda_+$ and $\Lambda_-$ have a positive real part. It follows that for negative $\epsilon$ both vacua are stable and therefore we are in the broken phase, with residual symmetry  ${\cal U} (N_+) \times {\cal U} (N_-)$.
In this case the off-diagonal fluctuations $M_{+-}$ and $M_{-+}$, lacking a quadratic term, have the interpretation of massless Goldstone modes. 

\end{itemize}

Around a vacuum state the effective theory for the fluctuation field writes:
\begin{align} \label{eq:eff-off}
& Z_{(N_+,N_-)} \simeq  e^{- N^{D-1} \left[ N_+ \left( \frac{a_+^2}{2} - \ln \frac{a_+}{g} \right) + N_- \left( \frac{a_-^2}{2} - \ln \frac{a_-}{g} \right)   \right]} \int [dM^{+-} dM^{-+ }] \int [dM^{++}] [dM^{--}] \;\; e^{-N^{D-1} S^{\rm eff} }\crcr
& S^{\rm eff} =  \frac{1}{2}  (1-a_+^2) \Tr[(M^{++})^2] + \frac{1}{2}  (1-a_-^2) \Tr[(M^{--})^2] -\sum_{p\ge 3} \frac{1}{p} \sum_{i_1,\dots i_p\in\{+,-\}} \prod_{q=1}^p a_{i_q} M^{i_qi_{q+1}} \, .
\end{align}

The theory thus obtained looks quite complicated.  We can considerably simplify it by getting rid of the off-diagonal blocks $M^{+-}$ and  $M^{-+}$ (which only appear at cubic order in the above action),
exploiting the unitary invariance. The procedure we detail below comes to explicitly integrating out the Goldstone modes, and obtain the effective theory for the ``radial'' degrees of freedom.

\section{Integrating out the massless modes.}\label{sec:intout}

Knowing that at the saddle point the eigenvalues split in two sets of $N_+$ and $N_-$ identical eigenvalues,
it is convenient to consider the subgroup $\cH = \cU(N_+) \times \cU(N_-)\subset \cU(N)$ defined by:
\be
\cH = \bigg\{ W \in \cU(N) \bigg{| } \;   W=e^{i  \begin{pmatrix}   W_+ & 0 \\ 0 & W_-     \end{pmatrix}  }
 = \begin{pmatrix}   e^{i  W_+} & 0 \\ 0 & e^{i  W_-}      \end{pmatrix}  \bigg\} \;, 
\ee
for two arbitrary Hermitian matrices $W_+$ and $W_-$ of sizes $N_+ \times N_+$ and respectively $N_- \times N_-$.

We first show that any hermitian matrix $H$ can be brought into block diagonal form:
\be
H = V \begin{pmatrix}   H_+ & 0 \\ 0 & H_-  \end{pmatrix} V^\dagger \;,
\ee
for some $V\in \cU(N)/\cH$, i.e. for some $V$ in the left coset of $\cH$ in $\cU(N)$.
We subsequently perform the change of variables $H\to (H_+,H_-,V)$, and compute the Jacobian $[dH] = \det(J)  [dH_+] [dH_-] [dV] $.

\paragraph{Block diagonalization.}
Let us write $H=U E U^\dagger$, where $E$ is diagonal and $U\in \cU(N) $ is a unitary transformation that diagonalizes $H$.
Let us denote $V\in \cU(N)/\cH $ the representative in $ \cU(N)/\cH $ of the equivalence class of $U$, $[U] = \{ U W |\; W \in \cH \}$.
Then there exists a $W \in \cH$ such that\footnote{For the sake of precision, $U\in  \cU'(N)\equiv  \cU(N)/\cU(1)^{\otimes N}$, because a multiplication of $U$ to the right
by a diagonal matrix $\text{ diag}(e^{i\phi_1},\ldots, e^{i\phi_N})$ leaves $H$ unaffected. Similarly, we should also quotient $\cH\to \cH'\equiv \cH/\cU(1)^{\otimes N}$. However, $ \cU(N)/\cH \simeq \cU'(N)/\cH'$.}:
\be
H = UE U^\dagger = VW^{\dagger} E W V^\dagger 
 = V \begin{pmatrix}   e^{- i  W_+} E_+ e^{ i  W_+} & 0 \\ 0 & e^{ - i  W_-} E_- e^{ i  W_-}     \end{pmatrix} V^\dagger  \;,
\ee
where $E_+$ denotes the first $N_+$ eigenvalues of $H$ and $E_-$ denotes the subsequent $N_-$ eigenvalues. 
Defining $H_+ = e^{i  W_+} E_+ e^{-i  W_+}$ and $H_- = e^{i  W_-}E_- e^{-i  W_-}$, shows that $H$ can be brought into block diagonal form 
by conjugation by some $V \in \cU(N)/\cH $. 

\paragraph{Jacobian.} In order to compute the Jacobian $J$ of the change of variables $H\to (H_+,H_-,V)$, we will first 
show that the Jacobian does not depend on $V$ and then compute it for $V$ close to the identity.

Denoting $W_+$ and $W_-$ two Hermitian matrices, and $B$ a complex matrix (of sizes $N_+ \times N_+$, $N_- \times N_-$ and $N_+ \times N_-$, respectively),
any $U\in \cU(N)$ close to the identity writes as:
\be U=e^{i   \begin{pmatrix}  W_+ & B \\ B^\dagger & W_-     \end{pmatrix}  } = 
   e^{i   \begin{pmatrix}   0 & B \\ B^\dagger & 0    \end{pmatrix}  } \; 
    e^{i \begin{pmatrix}   W_+ & 0 \\ 0 & W_-     \end{pmatrix}  } + O(BW) \;,
\qquad   e^{i \begin{pmatrix}   W_+& 0 \\ 0 & W_-     \end{pmatrix}  } \in \cH \;, \ee
hence the class of $U$ in $\cU(N)/\cH$ can be represented by: 
\be
 V   \approx \mathbb{I} +i   \begin{pmatrix}   0 & B \\ B^\dagger & 0     \end{pmatrix} 
   \; .
\ee
The infinitesimal variation of $V$ is then:
\be
 dV =  i  \begin{pmatrix}
        0 & dB \\ dB^{\dagger} & 0 
      \end{pmatrix} \;,
\ee
and for any fixed $W \in \cU(N)$ close to the identity, $ W =  \mathbb{I} +i   T $ (with $T$ a hermitian matrix), we have:
\be
 W V \approx  \mathbb{I} +i T + i   \begin{pmatrix}   0 & B \\  B^{\dagger} & 0     \end{pmatrix} + O(T^2,B^2,QB) \Rightarrow [d(WV)] \approx
 i \begin{pmatrix}   0 & dB \\ dB^{\dagger} & 0     \end{pmatrix} \approx [dV] \;,
\ee
hence the Jacobian does not depend on $V$. In fact, the invariance of the measure induced on the quotient space by the Haar measure on the group can be proved under quite general 
conditions (which in particular apply to our case as $\cU(N)$ is a compact group), see for example \cite{Nachbin}.

We now follow the standard strategy. First, we write an infinitesimal variation of the matrix $H$ as:
\be
 dH = dV \begin{pmatrix}   H_+ & 0 \\ 0 & H_-     \end{pmatrix} V^\dagger
 + V \begin{pmatrix}   H_+ & 0 \\ 0 & H_-     \end{pmatrix} dV^\dagger 
 + V \begin{pmatrix}   dH_+ & 0 \\ 0 & dH_-     \end{pmatrix} V^\dagger \;,
\ee
and use $V V^\dagger = \mathbb{I}  \Rightarrow dV\, V^\dagger + V\,  dV^\dagger = 0$ 
to eliminate $dV^\dagger$. Then we use the invariance of the measure and evaluate the Jacobian at $V=\mathbb{I}$:
\be
 dH =   i  \begin{pmatrix}    0 & dB H_- \\ dB^\dagger H_+ & 0      \end{pmatrix}
 -  i   \begin{pmatrix}    0 & H_+ dB \\  H_- dB^\dagger & 0      \end{pmatrix}
 + \begin{pmatrix}   dH_+ & 0 \\ 0 & dH_-     \end{pmatrix} .
\ee
Finally, writing $dH_{ij}= J_{ij,kl} dX_{kl}$, for $X=(H_+,H_-,B,B^\dagger)$ and $H = (H^{++}, H^{--}, H^{+-}, H^{-+})$ we read off the Jacobian:
\be
 J \equiv \frac{dH}{dX} =  \begin{pmatrix}    \mathbb{I}_+\otimes \mathbb{I}_+ & 0 & 0 & 0 \\
                            0 & \mathbb{I}_-\otimes \mathbb{I}_- & 0 & 0 \\
                            0 & 0 &  i ( \mathbb{I}_+\otimes H_- - H_+\otimes \mathbb{I}_- ) & 0 \\
                            0 & 0 & 0 &  -i (  H_- \otimes \mathbb{I}_+ -  \mathbb{I}_-  \otimes H_+)
  \end{pmatrix} \;,
\ee
with determinant $\det(J)= \det( H_+\otimes \mathbb{I}_- -\mathbb{I}_+\otimes H_-)^2$.
We have thus shown that:
\begin{equation} \label{eq:block-dec}
\boxed{  \int [dH] \;  e^{-N^{D-1}S(H)} = \int [dV] \int [dH_+] [dH_-] \;\;   \Big[  \det \left(  H_+ \otimes \mathbb{I}_- - \mathbb{I}_+ \otimes H_-\right) \Big]^2 \; e^{ -N^{D-1}S \begin{pmatrix}    H_+ & 0 \\ 0 & H_-      \end{pmatrix}   } \; .}
\end{equation}
The integral over $V$ is now completely decoupled and can be absorbed into a proper normalization.

This formula generalizes the usual reduction to eigenvalues, which can be recovered by taking $N_-=1$, and then iteratively repeating the decomposition on the remaining $N_+\times N_+$ block.
This will iteratively generate the usual Vandermonde determinant $\Delta(\lambda)^2= \prod_{i<j} (\lambda_i-\lambda_j)^2$.
Keeping instead $N_+$ and $N_-$ generic, but reducing to eigenvalues our formula \eqref{eq:block-dec}, we write:
\begin{equation} \label{eq:block-eig}
 \int [dH] \;  e^{-N^{D-1}S(H)} = \int \left( \prod_{i=1}^{N_+} d\lambda^+_i \, \Delta(\lambda^+)^2\, e^{ -N^{D-1}S(\lambda^+_i)}\right) 
     \left( \prod_{j=1}^{N_-} d\lambda^-_j \, \Delta(\lambda^-)^2 \, e^{ -N^{D-1}S(\lambda^-_j)} \right) \;\; 
        \prod_{i,j}  (\lambda^+_i -\lambda^-_j)^2 \;,
\end{equation}
and of course the two sets of eigenvalues can be recombined as $\lambda=(\lambda^+,\lambda^-)$ to give the usual expression:
\begin{equation} \label{eq:usual-eig}
 \int [dH] \;  e^{-N^{D-1}S(H)} = \int \prod_{i=1}^{N} d\lambda_i \, \Delta(\lambda)^2\, e^{ -N^{D-1}S(\lambda_i)} \, .
\end{equation}

The effect of the scaling with $N^{D-1}$ in Eq.~\eqref{eq:action} comes now into play. 
From Eq.~\eqref{eq:usual-eig} we conclude that we can evaluate the partition function by a saddle point method without having to care about the Vandermonde determinant if $D>2$. In fact $\log \Delta(\lambda)$ is of order $N^2$, 
whereas $N^{D-1}\sum_i S(\lambda_i)$ is of order $N^D$.
This is de facto what we have done when writing the saddle point equation \eqref{eq:eom-1eig}  for the action alone (i.e. without measure contributions). Eq. \eqref{eq:eom-1eig} yields just two solutions 
for the eigenvalues at the saddle point. We can then directly use eq. \eqref{eq:block-eig} and perform the saddle point approximation, taking in each block one of the solutions
i.e. $\lambda^+=a_+$ and $\lambda^-=a_-$.
In order to study the fluctuations around the saddle point one would then write $\lambda^\pm_i = a_\pm + \mu^\pm_i$ and study the corrections coming from the diagonal fluctuations $\mu^\pm_i$.

\section{The effective theory}\label{sec:efft}

If one computes the partition function using the eigenvalue decomposition, the connection between the Feynman graphs of the theory and random triangulations is lost. In order to keep this connection explicit
and find an interpretation of the effective theory around a non trivial vacuum in terms of random triangulations, one must 
start from Eq.~\eqref{eq:block-dec}, and decompose the matrices $H_{\pm}$ into a background and a fluctuation $H_\pm = a_\pm + M_\pm$.
We thus obtain:
\begin{align} \label{eq:eff}
Z_{(N_+,N_-)}  = & e^{- N^{D-1} \left[ N_+ \left( \frac{a_+^2}{2} - \ln \frac{a_+}{g} \right) + N_- \left( \frac{a_-^2}{2} - \ln \frac{a_-}{g} \right)   \right]} \crcr
   & \times \int [dM_+] [dM_-]   \;\;   \Big[ \det \Big( (a_+-a_-)\mathbb{I}_+\otimes \mathbb{I}_-  + M_+ \otimes \mathbb{I}_- - \mathbb{I}_+ \otimes M_-\Big) \Big]^2 \;  e^{-N^{D-1} S^{\rm eff} } \;, \crcr
 S^{\rm eff} =&  \frac{1}{2}  (1-a_+^2) \Tr[M_+^2] + \frac{1}{2}  (1-a_-^2) \Tr[M_-^2]
 - \sum_{p\ge 3} \frac{a_+^p}{p}   \Tr\left[ M_+^p    \right] - \sum_{p\ge 3} \frac{a_-^p}{p}   \Tr\left[ M_-^p    \right]  \, ,
\end{align}
achieving our goal of eliminating the off-diagonal blocks in Eq.~\eqref{eq:eff-off}, at the price of introducing a non trivial determinant term.
Note however that $M_\pm$ in Eq.~\eqref{eq:eff} are not the same matrices as the $M^{++}$ and $M^{--}$ in Eq.~\eqref{eq:eff-off}, in particular they depend non-trivially on the off-diagonal blocks $M^{+-}$ and $M^{-+}$ in 
Eq.~\eqref{eq:eff-off} just like the eigenvalues of a matrix depend on its off-diagonal elements. 

\subsection{Feynman graphs}

The integral in Eq.~\eqref{eq:eff} for the fluctuations around a saddle point:
\begin{align*}
  &   (a_+-a_-)^{2N_+N_-}  \int [dM_+] [dM_-]   \;\;  
   e^{2 \Tr \left[ \ln \left( \mathbb{I}_+\otimes \mathbb{I}_- + \frac{1}{a_+-a+_-} ( M_+ \otimes \mathbb{I}_- - \mathbb{I}_+ \otimes M_-  ) \right) \right] } \crcr
& \;\; \times  e^{-N^{D-1} \left[   \frac{1}{2}  (1-a_+^2) \Tr[M_+^2] + \frac{1}{2}  (1-a_-^2) \Tr[M_-^2]
 - \sum_{p\ge 3} \frac{a_+^p}{p}   \Tr\left[ M_+^p    \right] - \sum_{p\ge 3} \frac{a_-^p}{p}   \Tr\left[ M_-^p    \right]    \right] } \; ,
\end{align*}
can be written more explicitly by expanding the logarithm:
\begin{align*}
 & (a_+-a_-)^{2N_+N_-} \int [dM_+] [dM_-]   \;\;  
   e^{2 \sum_{q\ge 1} \frac{(-1)^{q+1}}{q (a_+-a_-)^q }  \sum_{q_-=0}^q \binom{q}{q_-} (-1)^{q_-}\Tr[M_+^{q-q_- }] \Tr[M_-^{q_-}]  } \crcr
& \;\; \times  e^{-N^{D-1} \left[   \frac{1}{2}  (1-a_+^2) \Tr[M_+^2] + \frac{1}{2}  (1-a_-^2) \Tr[M_-^2]
 - \sum_{p\ge 3} \frac{a_+^p}{p}   \Tr\left[ M_+^p    \right] - \sum_{p\ge 3} \frac{a_-^p}{p}   \Tr\left[ M_-^p    \right]    \right] } \; ,
\end{align*}
leading to the effective action:
\begin{align} \label{Seff2}
& \tilde S^{\rm eff} = N^{D-1}  \frac{1}{2}  (1-a_+^2) \Tr[M_+^2]+ N^{D-1} \frac{1}{2}  (1-a_-^2) \Tr[M_-^2] \crcr
&\qquad  -  N^{D-1} \sum_{p_+\ge 3} \frac{a_+^{p_+}}{p_+}   \Tr\left[ M_+^{p_+}    \right] - N^{D-1}\sum_{p_-\ge 3} \frac{a_-^{p_-} }{p_-}   \Tr\left[ M_-^{p_-}    \right]    \crcr
& \qquad +  \sum_{q_+,q_-\ge 0}^{q_++q_-\ge 1} \frac{2}{ (a_+ -a_-)^{q_++q_-} } \frac{(-1)^{q_+}}{q_++q_-} \binom{q_++q_-}{q_+}  \Tr\left[ M_+^{q_+}    \right]  \Tr\left[ M_-^{q_-}    \right]  \; .
\end{align}

Observe that the terms coming from the non trivial determinant generate new quadratic terms. These terms could be included in the measure, but we prefer to treat them as bivalent vertices.

The logarithm of the partition function is a sum over connected graphs. The graphs have three kinds of vertices:
\begin{itemize}
 \item vertices coming from $\sum_{p_+\ge 3} \frac{a_+^{p_+}}{p_+}   \Tr\left[ M_+^{p_+}    \right]  $. We denote $v_{p_+}$ the number of such vertices of coordination $p_+$ and $V_+$ the total number of such vertices,
    $V_+ = \sum_{p_+\ge 3} v_{p_+}$. These vertices have coordination at least 3. These vertices represent polygons in the $+$ sector, which will glue together to form $+$ surfaces. 
 \item vertices coming from $\sum_{p_-\ge 3} \frac{a_-^{p_-} }{p_-}   \Tr\left[ M_-^{p_-}    \right]  $. We denote $v_{p_-}$ the number of such vertices of coordination $p_-$ and $V_-$ the total number of such vertices,
    $V_- = \sum_{p_-\ge 3} v_{p_-}$. These vertices have coordination at least 3. These vertices represent polygons in the $-$ sector, which will glue together to form $-$ surfaces. 
 \item vertices coming from $ \sum_{q_+,q_-\ge 0}^{q_++q_-\ge 1} \frac{2}{ (a_+ -a_-)^{q_++q_-} } \frac{(-1)^{q_+}}{q_++q_-} \binom{q_++q_-}{q_+}  \Tr\left[ M_+^{q_+}    \right]  \Tr\left[ M_-^{q_-}    \right]  $.
 We denote $v_{q_+q_-}$ the number of such vertices of coordination $q_+$ and $q_-$ and $V_{+-}$ the total number of such vertices $V_{+-}=\sum_{q_+,q_-\ge 0}^{q_++q_-\ge 1} v_{q_+q_-}$. These vertices have a joint
 coordination $q_++q_-$ at least one, but can have coordination $0$ in one of the sectors. These vertices represent a pair of a $+$ and a $-$ polygons connected at a nodal point. The $+$ polygons (resp. the $-$ polygons)
 will glue into $+$ (respectively $-$) surfaces, and the $+$ and $-$ surfaces will be connected at a mixed vertex by a nodal point.
\end{itemize}

We can represent the $+$ and $-$ vertices as solid vertices, and the $+-$ vertices as two solid vertices (one for the $+$ sector and one for the $-$ sector) connected by a dashed edge, see Figure \ref{fig:graph}. 
Each graph represents a nodal surface (for definitions and properties of nodal surfaces see for example \cite{petro2008moduli}). 
\begin{figure}[ht]
\begin{center}
 \includegraphics[width=4cm]{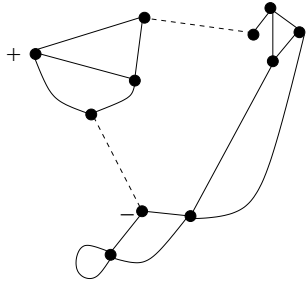}  
\caption{An example of a graph generated by the effective action in Eq.~\eqref{Seff2}. Here the two dashed edges represent two $+-$ vertices 
of the type $\Tr \left[ M_+^2 \right]  \Tr\left[ M_-^2  \right]$ and  $\Tr \left[ M_+^2 \right]  \Tr\left[ M_-  \right]$.}
\label{fig:graph}
\end{center}
\end{figure}
The $E_+$ edges connecting two $M_+$ half edges (which we call the $+$ edges) bring a factor $\frac{1}{N^{D-1} (1-a_+^2) }$. 
The $E_-$ edges connecting two $M_-$ half edges (which we call the $-$ edges) bring a factor $ \frac{1}{N^{D-1} (1-a_-^2) }$. 

The graph is connected but, by erasing the dashed edges, it might disconnect into several 
connected components. Let us call $C_+$ the number of the connected components of the graph made exclusively of the edges $+$, $g_+$ the sum of the genera of these components and $F_+$ the number of faces of these 
components (and similarly for the $-$ components). Up to irrelevant combinatorial factors the amplitude of a graph is:
\begin{align}
 & \left( \frac{1}{ N^{D-1} (1-a_+^2) } \right)^{E_+}   \left( \frac{1}{ N^{D-1} (1-a_-^2) } \right)^{E_-}
 N^{(D-1) V_+ } a_+^{\sum_{p_+\ge 3} p_+v_{p_+} }  N^{(D-1)V_- } a_-^{\sum_{p_-\ge 3} p_- v_{p-} }\crcr
 & \times  \frac{(-1)^{ \sum_{q_+,q_-\ge 0}^{q_++q_-\ge 1} (q_++1) v_{q_+q_-} }}{(a_+-a_-)^{ \sum_{q_+,q_-\ge 0}^{q_++q_-\ge 1} (q_++q_-) v_{q_+q_-} }}  N_+^{F_+} N_-^{F_-} \; .
\end{align}
Note that in the symmetric phase (i.e. $g<g_c$ and $N_+=0$) the amplitudes are positively definite, while in the broken phase (i.e. $g>g_c$ and $N_+>0$) they are complex. 
In the following we will not keep track of global minus signs for individual amplitudes.

The various numbers of vertices, edges and faces are related by the following topological relations: 
\begin{align} \label{DehnSommerville}
& 2 E_+  =  \sum_{p_+\ge 3} p_+ v_{p_+} + \sum_{q_+,q_-\ge 0}^{q_++q_-\ge 1} q_+ v_{q_+q_-} \;, \qquad 2 E_-  =  \sum_{p_-\ge 3} p_- v_{p_-} + \sum_{q_+,q_-\ge 0}^{q_++q_-\ge 1}q_- v_{q_+q_-} \; ,
\end{align}
\begin{align} \label{Euler}
& V_+  + V_{+-} - E_+ + F_+ = 2C_+ -2g_+ \;, \qquad V_- + V_{+-} - E_- + F_- = 2C_- -2g_- \; ,
\end{align}
\begin{align} \label{Betti}
& E_+ - V_+ - V_{+-} + C_+ \ge 0 \;, \qquad E_- - V_- -V_{+-} + C_- \ge 0 \; , \crcr
&  E_+ + E_-  - V_+ -V_- -V_{+-} +1 \ge 0 \; ,\crcr
& V_{+-} - C_+ - C_- +1\ge 0 \; .
\end{align}
Equations \eqref{DehnSommerville} and  \eqref{Euler} are usual graph relations applied to the $+$ and $-$ subgraphs:
the first simply states that summing the numbers of half edges at every vertex amounts to counting twice the number of edges, while the second is simply the Euler characteristic formula.
Equations \eqref{Betti} state that the first Betti number of a graph is always non-negative, in particular it is zero for trees and positive for graphs with loops (in fact it counts the number of loop edges, or \emph{excess edges}).
A generic connected graph, consisting of $+$ and $-$ components and dashed edges, is dual to a nodal surface with $V_{+-}$ nodal points. 
The total genus of a nodal surface is $g_+ +g_- + (  V_{+-} -C_+ - C_- +1 )$, i.e. the sum of the genera of the $+$ components, the $-$ components and the first Betti 
number of the abstract graph obtained by collapsing the $+$ and $-$ components to vertices connected by the dashed edges. Of course $g_+ +g_- + (  V_{+-} -C_+ - C_- +1 )\geq 0$, and 
the equality holds only for the case in which all the $+$ and $-$ component are planar, and the abstract graph is a tree.

Defining the filling fractions $n_+ = \frac{N_+}{N} \le 1 $ and $n_- = \frac{N_-}{N} \le 1$, the amplitude of a graph writes:
\begin{align*}
 & \frac{a_+^{\sum_{p_+\ge 3} p_+v_{p_+} } a_-^{\sum_{p_-\ge 3} p_- v_{p-} }   }{ (1-a_+^2)^{E_+}   (1-a_-^2)^{E_-}
 (a_+-a_-)^{ \sum_{q_+,q_-\ge 0}^{q_++q_-\ge 1} (q_++q_-) v_{q_+q_-} }  } \; n_+^{F_+} n_-^{F_-} \crcr
 & \qquad \times  N^{  (D-1) V_+ - (D-1) E_+ +F_+   }  N^{  (D-1) V_+ - (D-1) E_+ +F_-   }  \; ,
\end{align*}
which, using:
\begin{align*}
& (D-1) V_+ - (D-1) E_+ +F_+  = (D-1) V_+ - (D-1) E_+ + E_+ - V_+ -V_{+-} + 2 C_+ -2g_+  \crcr
& \qquad =  - (D-2) (E_+ -V_+   )  -V_{+-} + 2 C_+ -2g_+ \;,
\end{align*}
becomes:
\begin{align} \label{graphAmp}
  & \frac{a_+^{\sum_{p_+\ge 3} p_+v_{p_+} } a_-^{\sum_{p_-\ge 3} p_- v_{p-} }   }{ (1-a_+^2)^{E_+}   (1-a_-^2)^{E_-}
 (a_+-a_-)^{ \sum_{q_+,q_-\ge 0}^{q_++q_-\ge 1} (q_++q_-) v_{q_+q_-} } } \; n_+^{F_+} n_-^{F_-} \crcr
 & \qquad \qquad  \times N^{  
  -(D-2) (  E_+ - V_+  ) - (D-2) ( E_- - V_-   ) + 2 - 2 ( V_{+-} - C_+ - C_- +1)  -2g_+ -2g_-
 }\; .
\end{align}

\section{The phases of the model and their geometry}\label{sec:phgeom}

The main difference between the $D=2$ case of matrices and the $D\ge 3$ case of tensors comes from the following theorem.
\begin{theorem}
 The number of graphs at fixed order in $1/N$ is:
 \begin{itemize} 
  \item \emph{finite} for $D\ge 3$,
  \item \emph{infinite} for $D=2$. 
 \end{itemize}
\end{theorem}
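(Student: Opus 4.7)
The plan is to read off the $1/N$-order of a graph from the amplitude \eqref{graphAmp} and show that, for $D\ge 3$, fixing this order constrains a handful of non-negative integer invariants that together bound the size of the graph. Writing the total power of $N$ as $N^{2-2k}$, the exponent is
\begin{equation*}
 2k\ =\ (D-2)(E_+-V_+) + (D-2)(E_--V_-) + 2(V_{+-}-C_+-C_-+1) + 2g_+ + 2g_-.
\end{equation*}
Each of the five summands is non-negative: $g_\pm\ge 0$, $V_{+-}-C_+-C_-+1\ge 0$ is the last inequality of \eqref{Betti}, and $E_\pm-V_\pm\ge 0$ because every pure $\pm$ vertex has coordination at least $3$, which yields $2E_\pm\ge 3V_\pm\ge V_\pm$.

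For the case $D=2$ the first two terms in $2k$ drop out, so infinite families at fixed order arise for free. Concretely, take $V_+=V_{+-}=0$, $C_-=1$, $g_-=0$ and let $V_-$ vary: the exponent equals $2$ for every such graph, and the number of connected planar graphs with all coordinations $\ge 3$ is already infinite (these are the usual planar maps of a one-matrix model). This establishes the second half of the theorem.

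For $D\ge 3$ the factor $(D-2)\ge 1$ produces the bounds $E_\pm-V_\pm\le 2k/(D-2)$, $V_{+-}-C_+-C_-+1\le k$ and $g_\pm\le k$. The coordination inequality $2E_\pm\ge 3V_\pm$ rearranges to $V_\pm\le 2(E_\pm-V_\pm)$, which bounds $V_\pm$ and hence $E_\pm$ by explicit multiples of $k$. The main obstacle is bounding $V_{+-}$: the naive estimate $V_{+-}\le C_++C_--1+k$ alone is insufficient, because isolated mixed vertices can inflate $C_\pm$ without contributing to the order.

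I would bypass this by refining the count of mixed vertices. Partition $V_{+-}=v_0+v^0+v_{\ge 1}$ according to whether $q_+=0$, $q_-=0$, or both $q_\pm\ge 1$. The half-edge identity
\begin{equation*}
 2E_-\ =\ \sum_{p_-\ge 3} p_-\, v_{p_-}\ +\ \sum_{q_+,q_-} q_-\, v_{q_+q_-}\ \ge\ 3V_- + v_0 + v_{\ge 1}
\end{equation*}
gives $v_0+v_{\ge 1}\le 2(E_--V_-)$, and symmetrically $v^0+v_{\ge 1}\le 2(E_+-V_+)$. Adding these dominates $V_{+-}=v_0+v^0+v_{\ge 1}$, so $V_{+-}$ is also bounded by a multiple of $k$. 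With all vertex and edge counts controlled by explicit functions of $k$ and $D$, only finitely many connected Feynman graphs can survive at order $N^{2-2k}$, which finishes the $D\ge 3$ case.
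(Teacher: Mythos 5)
Your proof is correct and follows essentially the same route as the paper: both read the order $2k$ off the exponent in \eqref{graphAmp}, use the non-negativity of $g_\pm$, of $V_{+-}-C_+-C_-+1$ and of $E_\pm-V_\pm$ (via $2E_\pm\ge 3V_\pm$) to bound all edge and vertex counts by multiples of $k$ for $D\ge 3$, and observe that for $D=2$ the order depends only on the total genus, leaving an infinite family. Your only real addition is the more careful half-edge count bounding $V_{+-}$ by $2(E_+-V_+)+2(E_--V_-)$, where the paper settles for the cruder (and essentially equivalent) observation that $V_{+-}$ is bounded by the total number of half-edges attached to mixed vertices, i.e.\ by a linear function of $E_++E_-$.
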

\begin{proof}
Let us consider the scaling with $N$ of a term:
\begin{align*}
 -(D-2) (  E_+ - V_+  ) - (D-2) ( E_- - V_-   ) + 2 - 2 ( V_{+-} - C_+ - C_- +1)  -2g_+ -2g_- \;.
\end{align*}
As the graph is connected $ V_{+-} - C_+ - C_- +1 \ge 0 $, hence this scaling is bounded from above by:
\begin{align*}
   2 - (D-2) (  E_+ - V_+  ) - (D-2) ( E_- - V_-   ) \; .
\end{align*}
The $V_+$ and $V_-$ vertices are at least three valent, hence $ 2E_+ \ge 3V_+$ and $2E_- \ge 3V_-$, and the scaling with $N$ of a term is bounded from above by:
\[
 2 - \frac{D-2}{3} (E_+ + E_- ) \; .
\]
Finally, the number of dashed edges equals $V_{+-}$ and is itself bounded by $E_+ + E_- + 1$. Consequently, for $D\ge 3$, at each fixed scaling in $1/N$ only 
connected graphs with at most a finite number of ($+$, $-$, and dashed) edges contribute. As the number of such graphs is finite the first assertion of the theorem follows.

For $D=2$ the above bound does not constrain the number of graphs (it just states that the amplitude of any graph is bounded by $N^2$). The scaling with $N$ of a graph becomes for $D=2$:
\[
 2 - 2 \left( V_{+-} -C_+ - C_- +1 + g_+ +g_-    \right) \;,
\]
i.e. it is proportional with the total genus of the nodal surface dual to the graph. As the number of triangulated surfaces of fixed total genus is infinite, the second assertion of the theorem follows.

\end{proof}

Having estimated a bound for the scaling in $N$ of the graphs generated by the effective action Eq.~\eqref{Seff2}, we can now address the question whether our expansion around the ``vacua'' in Eq.~\eqref{vacua} is justified.

The connected one point function $ \Braket{ \frac{1}{N} \Tr[M_{-}]}_{\rm connected}$, is a sum over graphs having an extra $-$ univalent vertex representing the external field $M_{--}$. 
The scaling with $N$ of such a graph is:
\be
 -1 -(D-1)  -(D-2) (  E_+ - V_+  ) - (D-2) ( E_- - V_-   ) + 2 - 2 ( V_{+-} - C_+ - C_- +1)  -2g_+ -2g_- \;,
\ee
where $V_-$ includes the extra univalent $-$  vertex. As $2E_- \ge 1 + 3(V_--1)$. It follows that the scaling with $N$ of a connected one point graph is bounded by 
$  -(D-2) (  E_+ - V_+  ) - (D-2) \big[ E_- -  ( V_- -1)   \big] \le -\frac{D-2}{3} \left( E_+ + E_- + 1\right)  $ hence:
\be
  \Braket{ \frac{1}{N} \Tr[M_{-}]}_{\rm connected} \sim \begin{cases}
     O(1) \;, & \qquad D=2 \crcr
     O(N^{-\frac{D-2}{3}}) \;, & \qquad D\ge 3
 \end{cases} \;,
\ee
that is the connected one point function starts at order $1$ for $D=2$, but is strictly smaller in $1/N$ for $D\ge 3$.

Similarly, we can look at the free energy:
\be
\cF = -\frac{1}{N^D} \ln Z_{(N_+,N_-)} 
 = n_+ \left( \frac{a_+^2}{2} - \ln \frac{a_+}{g} \right) + n_- \left( \frac{a_-^2}{2} - \ln \frac{a_-}{g} \right) 
   + \frac{1}{N^{D-2}}\tilde \cF \, ,
\ee
where $\tilde \cF$ contains the graphs for which we proved that an upper bound for the scaling is $N^{- \frac{D-2}{3} (E_+ + E_- )}$.
That is, the correction from the effective theory to the free energy computed at the saddle point is of the same order as the latter for $D=2$, but it is subleading in $1/N$ for $D>2$.

\subsection{The matrix case $D=2$} 

It is now apparent why the $D=2$ case is very different from the $D\ge 3$. Indeed, for $D=2$ the critical behavior of the amplitude \eqref{graphAmp} of an individual graph has no 
relevance: the free energy at fixed order in $1/N$ is a sum of an \emph{infinite} family of graphs, and acquires a critical behavior due to the criticality of this infinite sum. In particular, 
it is well known that this sum becomes critical at the critical constant $g'_c = \frac{1}{\sqrt{12} }$, 
which is smaller than the critical value $g_c = \frac{1}{2}$, at which the amplitude of an individual graph becomes critical. 

The fact that one still obtains an infinity of graphs at any fixed order in $1/N$ shows that the contribution coming from the measure (the Vandermonde determinant) can not be neglected in this case.
The ``vacuum state'' of a diagonal matrix with entries $a_+$ and $a_-$ which we obtained by neglecting the contributions of the measure is far from a genuine vacuum state of the theory,
because, as we showed above, $ \Braket{ \frac{1}{N} \Tr[M_{-}]}_{\rm connected}$ is of the same order in $1/N$ as $ \Braket{ \frac{1}{N} \Tr[\bar H]}_{\rm connected}$.

\subsection{The tensor case $D\ge 3$.}

For $D\ge 3$ one can derive the critical behavior of the free energy just by looking at the critical behavior of individual graphs, as the free energy
is just a finite sum over such contributions. The diagonal vacua we identified are close to the genuine vacua of the theory as the connected one point function is subleading in $1/N$.

\subsubsection{The symmetric phase}

We first consider the case $\epsilon = 1-4g^2 > 0$. The only positive mass eigenvalue is $\Lambda_-$, hence the unique stable vacuum 
corresponds to filling fractions $n_+=0, n_-=1$. In this case we have no $+$ components, and no nodal points.
In this phase the graphs are made of a single connected $-$ component, and their amplitude is:
\begin{align*}
  & \frac{   a_-^{\sum_{p_-\ge 3} p_- v_{p-} }   }{    (1-a_-^2)^{E_-}
 } \quad   N^{  2 - (D-2) ( E_- - V_-  )      -2g_- 
 }   \;.
\end{align*}

From the onset we observe that the non planar graphs are subleading. Indeed, if a graph is non planar $g_->0$, its amplitude is strictly 
smaller than the amplitude of a planar graph having the same number of edges $E_-$ and the same number of loop edges $E_- - V_- + 1  $ which is planar.

In the critical regime the amplitude of a planar graphs is:
\be
 \frac{1}{\sqrt{ \epsilon}^{ E_- } }   N^{2 - (D-2) ( E_- - V_-   )  } \;,
\ee
and, as all the $V_-$ vertices are at least three valent, we have $ V_- \le \frac{2}{3} E_- $, therefore this amplitude scales at most like:
\be
  N^2 \left( \sqrt{ \epsilon}  N^{\frac{D-2}{3}} \right)^{-E_-} \; .
\ee
The suppression in $N$ of this amplitude can be offset by the enhancement at criticality.  
In the double scaling regime $ N\to \infty $, $\alpha^{-1} \equiv \sqrt{ \epsilon}  N^{\frac{D-2}{3} } $ fixed, \emph{arbitrary planar three-valent graphs} contribute at the same order and 
dominate the free energy.

This is to be contrasted with the double scaling limit in the symmetric phase for the tensor model which includes all the interactions, Eq.~\eqref{eq:4model}, discussed in \cite{DGR}. In that case,
in the double scaling regime either only trees decorated with self loop edges (tadpole edges) on the leafs contribute for $D < 6$, or arbitrary three valent graphs contribute for $D\ge 6$.
The family of arbitrary three valent graphs is non summable, hence for $D\ge 6$ the double scaling series is not summable, like it is for matrices. For $D<6$, the family 
of trees decorated with self loop edges on the leafs is summable. However, it is very likely that it corresponds to a branched polymer phase.

For tensor models with only one interaction the dominant family in double scaling is summable, but it consists in all the planar three valent graphs. It is most likely that this 
double scaling regime corresponds to a Brownian sphere geometric phase of the model.

The free energy is constructed by summing all the planar three-valent graphs with (up to combinatorial factors) weight $\alpha^{E_-}$, 
therefore we expect to find a finite critical value $\alpha_c$ at which 
graphs with an infinite number of edges (or vertices, i.e. triangles in the dual surface) dominate. This is the standard situation that one exploits to construct a continuum limit, but we will not push this further here.

\subsubsection{The broken phase}

When $\epsilon = 1-4g^2 < 0$ both mass eigenvalues $\Lambda_+, \Lambda_-$ have a positive real part, hence we can have $n_+>0$. 
In this regime the amplitude of a graph is proportional to:
\be
 \frac{n_+^{F_+} n_-^{F_-}}{\sqrt{\epsilon}^{E_+ + E_- +   \sum_{q_+,q_-\ge 0}^{q_++q_-\ge 1} (q_++q_-) v_{q_+q_-}      }}   
   \; N^{  
  -(D-2) (  E_+ - V_+  ) - (D-2) ( E_- - V_-   ) + 2 - 2 ( V_{+-} - C_+ - C_- +1)  -2g_+ -2g_- 
 } \; .
\ee
As in the symmetric phase one can from the onset consider only the graphs such that $g_+$ and $g_-$ are zero. 

Using the topological relations, the amplitude of a graph with planar $+$ and $-$ components is:
\begin{align}
&  \frac{n_+^{F_+} n_-^{F_-}}{\sqrt{\epsilon}^{3E_+ + 3 E_-}} \sqrt{\epsilon}^{  \sum_{p_+\ge 3} p_+ v_{p_+}   \sum_{p_-\ge 3} p_+ v_{p_-}   } 
 N^{    -(D-2) (  E_+ - V_+  ) - (D-2) ( E_- - V_-   ) + 2 - 2 ( V_{+-} - C_+ - C_- +1)  } 
  \; .
\end{align}

Let us consider a graph $\cG$ and suppose that it possesses an internal $V_+$ vertex of coordination $p_+ >3$. We compare the amplitude of this graph with the one of the graph 
$\cG'$ which is identical with $\cG$, except that the $p_+$ valent vertex is replaced by a $3$ valent vertex and a $p_+-1 $ valent vertex connected by a $+$ edge. As $\cG'$ has 
one more $+$ edge and one more $+$ vertex with respect to $\cG$, the scaling in $N$ of the two amplitudes is the same. We have:
\be
 A(\cG') = \frac{1}{\sqrt{\epsilon}^3} \sqrt{\epsilon}^{3 + p_+-1 -p_+} A(\cG) \Rightarrow A(\cG') > A(\cG) \;.
\ee
It follows that the most singular graphs must have only three valent $+$ and $-$ vertices and their amplitude is:
\begin{align}
&  \frac{n_+^{F_+} n_-^{F_-}}{\sqrt{\epsilon}^{3E_+ + 3 E_-}} \sqrt{\epsilon}^{ 3V_+ + 3V_-  } 
 N^{    -(D-2) (  E_+ - V_+  ) - (D-2) ( E_- - V_-   ) + 2 - 2 ( V_{+-} - C_+ - C_- +1)  } \crcr
& \qquad = n_+^{2+E_+/3} n_-^{2+E_-/3} N^2 \left( \sqrt{\epsilon} N^{\frac{D-2}{3}}\right)^{- (E_+  + E_- ) }  N^{ - 2 ( V_{+-} - C_+ - C_- +1)  } \; .
\end{align}

In the double scaling regime $ N\to \infty $, $\alpha^{-1} \equiv \sqrt{ \epsilon}  N^{\frac{D-2}{3} } $ fixed, graphs with:
\begin{itemize}
 \item only three valent $V_+$ and $V_-$ vertices,
 \item planar $+$ and $-$ components,
 \item dashed edges that form a tree connecting the planar $+$ and $-$ components,
\end{itemize}
contribute to the same order and dominate the free energy. Such graphs represents planar nodal surfaces, such that all the vertices which are not nodal points are three valent.

Leading graphs in the double scaling regime in the symmetric and broken phases are represented in Figure \ref{fig:graph1}.
\begin{figure}[ht]
\begin{center}
 \includegraphics[width=8cm]{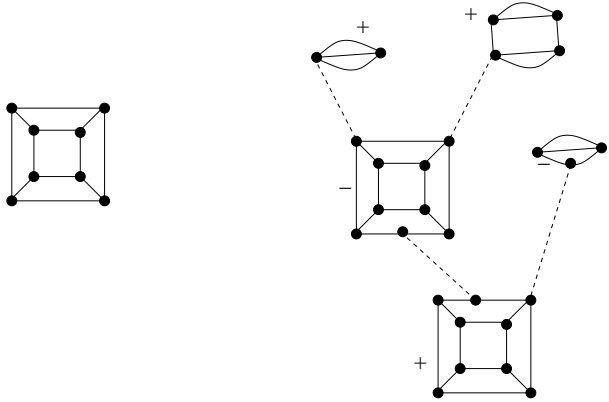}  
\caption{Examples of graphs dominating the double scaling regime. On the left a graph dominating the symmetric phase and on the right a graph dominating the broken phase.}
\label{fig:graph1}
\end{center}
\end{figure}

In the double scaling regime, the free energy in the broken phase is constructed by summing all the graphs detailed above, with a weight proportional to $(n_+^{1/3} \alpha)^{E_+}(n_-^{1/3} \alpha)^{E_-}$. 
Therefore it is effectively a double series in the parameters $z_+=n_+^{1/3} \alpha$ and $z_-=n_+^{1/3} \alpha$, which we expect to have a non-trivial domain of convergence in $\mathbb{C}^2$.
For $n_+=n_-=1/2$ and hence $z_+=z_-=z$, this series becomes again a simple series in one variable, $z$. In this case we expect to find a finite radius of convergence. Note however that the coefficients of the power 
series are now complex, therefore one might have to tune $z$ to a complex value in order to attain a continuum limit with a standard (in the sense of DT) geometrical interpretation.

\section*{Acknowledgements}
We would like to thank St\'ephane Dartois for useful remarks on nodal surfaces.

\providecommand{\href}[2]{#2}\begingroup\raggedright\endgroup

\end{document}